\def\eqref#1{equation~\ref{#1}}
\def\1{\bm{1}}
\def\vm{{\mathbf{m}}}
\def\vu{{\mathbf{u}}}
\def\vv{{\mathbf{v}}}
\def\vw{{\mathbf{w}}}
\def\vx{{\mathbf{x}}}
\def\vz{{\mathbf{z}}}
\DeclareMathAlphabet{\mathsfit}{\encodingdefault}{\sfdefault}{m}{sl}
\SetMathAlphabet{\mathsfit}{bold}{\encodingdefault}{\sfdefault}{bx}{n}
\def\sR{{\mathbb{R}}}
\begin{document}

\title{\Large Vertex-reinforced Random Walk for Network Embedding}

\author{
    Wenyi Xiao \thanks{Department of Computer Science and Engineering, Hong Kong University of Science and Technology. (wxiaoae@cse.ust.hk)}
  \and
    Huan Zhao \thanks{4Paradigm Inc., China. (zhaohuan@4paradigm.com)}
        \and
    Vincent W. Zheng \thanks{Webank, China. (vincentz@webank.com)}
    \and
    Yangqiu Song \thanks{Department of Computer Science and Engineering, Hong Kong University of Science and Technology. (yqsong@cse.ust.hk)}
}
%\author{Wenyi Xiao, Yangqiu Song\thanks{Hong Kong University of Science and Technology.}
%\and Huan Zhao\thanks{4Paradigm Inc, Beijing, China.}
%\and Vincent W. Zheng\thanks{Webank, China.}}

%\date{Hong Kong University of Science and Technology. 4Paradigm Inc, Beijing, China. Webank, China.}
\date{}
\maketitle

% Copyright Statement
% When submitting your final paper to a SIAM proceedings, it is requested that you include 
% the appropriate copyright in the footer of the paper.  The copyright added should be 
% consistent with the copyright selected on the copyright form submitted with the paper.
% Please note that "20XX" should be changed to the year of the meeting.

% Default Copyright Statement
\fancyfoot[R]{\scriptsize{Copyright \textcopyright\ 2020 by SIAM\\
Unauthorized reproduction of this article is prohibited}}

% Depending on which copyright you agree to when you sign the copyright form, the copyright 
% can be changed to one of the following after commenting out the default copyright statement
% above.

%\fancyfoot[R]{\scriptsize{Copyright \textcopyright\ 20XX\\
%Copyright for this paper is retained by authors}}

%\fancyfoot[R]{\scriptsize{Copyright \textcopyright\ 20XX\\
%Copyright retained by principal author's organization}}

%\pagenumbering{arabic}
%\setcounter{page}{1}%Leave this line commented out.

\begin{abstract} \small\baselineskip=9pt 
% Network embedding has been a

In this paper, we study the fundamental problem of random walk for network embedding.
We propose to use non-Markovian random walk, variants of vertex-reinforced random walk (VRRW), to fully use the history of a random walk path. 
To solve the getting stuck problem of VRRW, we introduce an exploitation-exploration mechanism to help the random walk jump out of the stuck set.
The new random walk algorithms share the same convergence property of VRRW and thus can be used to learn stable network embeddings.
Experimental results on two link prediction benchmark datasets and three node classification benchmark datasets show that our proposed approach \textit{reinforce2vec} can outperform state-of-the-art random walk based embedding methods by a large margin.
% Random walks have been a widely used method for graph-based tasks, such as PageRank computation, or network embedding. Existing random walk on graphs can be generalized as a Markovian process, which is memoryless, i.e., the next state is only dependent on the current state. In this paper, we propose a novel type of random walk on graphs, which is a non-Markovian process, i.e., the next state is dependent on the whole history states. Motivated by the established vertex-reinforced random walk (VRRW), we design a distribution-reinforced random walk (reinforce2vec-d), which can not only converge efficiently to a stationary distribution but also alleviate the inherent ``stuck'' problem of VRRW. By applying reinforce2vec-d to the graph embedding tasks, reinforce2vec-d outperforms existing Markov based random walk methods. Moreover, reinforce2vec-d enjoys the advantages of automatically and efficiently stopping given a graph comparing to the fine-tuned longer walk length of Markov based random walk methods, leading to the huge practical potential for graph-based tasks.
\end{abstract}

% {\small\baselineskip=9pt 
{\bf Keywords:} Random Walk, Vertex-reinforced Random Walk, Network Embedding
% }

\section{Introduction}

Network representation learning~\cite{HamiltonYL17survey} has been widely studied and used in data mining community to support many applications such as social network mining~\cite{liu2016aligning} and recommendation systems~\cite{EGES,ying2018graph}. 
Network representation learning can be either supervised by a downstream task, e.g., node classification for graph neural networks~\cite{kipf2017semi}, or self-supervised by the adjacency relations, e.g., to approximate a certain proximity defined on the graph, as surveyed in \cite{HamiltonYL17survey}.
When considering the latter case and the network is large, random walks can be applied to improve the scalability.
For example, in DeepWalk~\cite{deepwalk}, it first samples random walk paths from the network following the transition probabilities. Then it performs the Skip-gram algorithm~\cite{mikolov2013efficient} on the random walk paths to learn node embeddings to preserve the truncated average commuting time between nodes~\cite{cao2015grarep,YangSLT17}.
Node2vec~\cite{node2vec} further generalizes this idea to introduce a second-order random walk to balance the breadth and depth of the search to explore neighborhoods of nodes.
Both of them still focus on the Markovian property of the random walk and cannot make full use of the history of the whole path a walker has visited.
It may be possible to use higher-order Markovian chain based random walks to leverage longer histories.
However, this may increase the number of hyper-parameters and sampling complexity.
Nowadays, we have observed many different network representation learning algorithms being developed in the field \cite{HamiltonYL17survey}.
However, when considering the self-supervised learning case for network representation learning, the importance of performing random walk was underestimated.

In this paper, we study the fundamental problem of random walk that can affect the network representation learning.
To leverage the whole visited nodes in a path in a simple way, we propose to use a novel random walk which has a memory to remember the frequency of each visited node and sample based on both adjacent neighbors and the memory. 
This random walk has a strong mathematical foundation based on the vertex-reinforced random walk (VRRW)~\cite{pemantle1992vertex}.
Different from the original VRRW, our random walk is guided by two factors.
The first one is to guide the random walk to follow the memory, which is the same as VRRW.
However, methods that only based on memory will get the random walk stuck in a finite set of vertices~\cite{benaim2011dynamics}.
To resolve this problem, we use the second one to guide the random walk to explore unvisited nodes in neighbors to jump out of the stuck set.
By combining the two factors, it is naturally an exploitation and exploration mechanism widely used in bandit \cite{kocsis2006bandit} or reinforcement learning \cite{MasteringGo2016}.

\noindent$\bullet$ To realize exploitation, we first implement the original VRRW, of which convergence has been proved~\cite{pemantle1992vertex}, which means that our embedding is approximating stationary distributions of hitting time and commuting time.
Based on the convergence analysis, we also propose a new random walk called distribution-reinforced random walk (DRRW).
DRRW is guided by the convergence property of the random walk path evaluated by Kullback-Leibler (KL) or Jensen-Shannon (JS) divergence.

\noindent$\bullet$ To realize exploration, we use both $\epsilon$-greedy and Upper  Confidence  Bounds (UCB)~\cite{kocsis2006bandit} based exploration.
In particular, by $\epsilon$-greedy, with a probability $\epsilon$, we randomly select the neighbors without any bias.
With UCB, it is more likely explore a node's neighbor that is less visited.
By annealing the exploration factor, we can also make the walking path converge to a stationary distribution as the original VRRW.

To make a fair evaluation comparing different random walk models, we use the problem of standard network embedding as the test-bed.
We use two datasets for link prediction and three datasets for node classification to evaluate different algorithms. Experimental results show that our random walk based methods can outperform existing state-of-the-art models by a large margin.
This verifies our intuition that random walk plays a crucial role in network representation learning and should be paid more attention.
The contributions of the work can be summarized as follows:
\begin{enumerate}
     \item We propose to use a reinforced random walk, a non-Markovian process looking at the whole history of a path, to replace the original Markovian random walk used for network embedding and prove that the proposed random walks have stationary distributions to learn stable embeddings. The framework is denoted as \textit{reinforce2vec}.
     %For the best of our knowledge, we are the first to consider the change of transition probability over time for graph embedding, while previous works always regarding it as constant variables.
     \item We extend the original VRRW by using KL and JS divergences to guide the random walk, which is denoted as distribution-reinfoced random walk (DRRW). To alleviate the stuck set problem of VRRW and DRRW, we also propose an exploration mechanism to jump out of the stuck set.
     %We propose a reinforced random walk in an exploitation-exploration manner. This time-variant random can balance the convergence and the diversity of the random walk generated paths. 
     \item We conduct extensive experiments on real-world datasets, demonstrating that the proposed reinfoce2vec can improve the embedding performance compared to existing state-of-the-art random walk based network embedding methods. The code is available at https://github.com/HKUST-KnowComp/vertex-reinforced-random-walk.
\end{enumerate}

\section{Related Work}

Random walks have been widely used for graph and network data mining. 
For example, PageRank~\cite{page1999pagerank} is one of the earliest data mining algorithms that proven to be useful for real Web applications.
The stationary distribution of the first-order Markov chain of the random walk can be used as the rank authorities of Web pages.
Later, personalized PageRank has been proposed~\cite{JehW03,Haveliwala03} and is further applied to semi-supervised learning~\cite{ZhouS04}.
The label propagation based on personalized PageRank can be analyzed by the expected commuting time~\cite{HamLMS04} between nodes~\cite{ZhouS04}.
Vertex-reinforced random walk (VRRW) was studied originally by the probability community \cite{pemantle1992vertex}.
It was proven that the random walk can converge to a stationary distribution~\cite{pemantle1992vertex}.
Thus, later this idea has been introduced into PageRank \cite{mei2010divrank} and multilinear or higher-order PageRanks~\cite{GleichLY15,BensonGL17}. VDRW \cite{wu2018imverde} and HeteSpaceyWalk \cite{he2019hetespaceywalk} learn node representations in imbalanced Network and Heterogeneous Network, respectively.
However, as pointed out by~\cite{benaim2011dynamics}, VRRW can easily get stuck to a finite set of vertices, which may not be good enough to explore the full graph based on random walks.

Random walks are also widely used in network embeddings.
The representative works are DeepWalk~\cite{deepwalk} and node2vec~\cite{node2vec}, which use first-order and second-order random walks to guide the network embedding respectively.
Later a lot of analysis was conducted to build the relationship between the random walk based methods and the matrix factorization based methods, which shows the fact that DeepWalk uses the embedding vectors to learn from the truncated average commuting time between nodes~\cite{cao2015grarep,YangSLT17}.
This also implies that when performing random walk, to obtain a set of stable embedding vectors, one needs to run sufficient steps of random walk to estimate the stationary distributions of hitting time and commuting time.
However, there is a lack of study or analysis for this issue of using random walk.
Expected commuting time can also be used to learn network embedding, where a personalized random walk should be introduced to  obtain the stationary distributions~\cite{ChoBP15}.
In theory, the stationary distribution of a first-order Markov chain based random walk can be learned, however, in practice, we still need to perform truncated approximation to make the algorithm efficient~\cite{ChoBP15} (more details are in the arxiv version). 

Different from the existing random walk based embedding methods, our VRRW based algorithms enjoy two advantages.
First, being a vertex-reinforced random walk, the whole history of the random walk path can be leveraged.
Although being non-Markovian, it can be used to approximate high-order PageRanks~\cite{GleichLY15,BensonGL17}.
Second, as the stationary distribution of each random walk path can be guaranteed, we do not need to worry about the stationary distributions of hitting and commuting times. 
The embeddings learned from our random walk path are guaranteed to be stable.

\section{Reinforce2vec}

In this section, we introduce our proposed reinforced random walk based algorithms for network embedding. The notations in the remaining sections and their descriptions are shown in Table~\ref{tb:notations}.

Given an unweighted (un)directed network $G = (V, E)$, with vertices $V = \left \{ v_1, ..., v_N \right \}$  and edges $E = \left \{ e_1, ..., e_M \right \}$, the goal of network embedding is to determine a set of fixed length vectors $\vz_t \in \sR^d$ for each vertex $v_t$ such that similar vertices are close in the representation space. 
% In other words, the purpose of network embedding is to learn a mapping function $\phi$: $V \mapsto \sR^d$.
% Introduced by \cite{deepwalk}, a modeling of the vertex representation encodes the node as a function of its co-occurrences with other nodes in finite truncated random walks. These co-occurrences capture the diffusion in the neighborhood around each vertex in the graph and approximate the local community structure around a node. 
Precisely, the goal of the embedding method is to learn a representation that enables an estimate of the likelihood of the anchor node $v_t$ co-occurring with its neighbors, $v_x\in\left \{v_{t-C}, ...,v_{t-2}, v_{t-1}, v_{t+1}, v_{t+2}, ..., v_{t+C} \right \}$, the context nodes. In particular, the Skip-gram model \cite{mikolov2013efficient} models the conditional probability of a vertex pair, $(v_t, v_x)$ in the range of window size by a log-linear function of the inner product between the vectors $\vz_t$ and $\vz_x$ as follows:
\begin{equation}
\label{ucb}
    p(v_x|v_t) = \frac{\text{exp}(\vz_t\cdot \vz_x)}{\sum_{v_k \in V}\text{exp}(\vz_t \cdot \vz_k)}.
\end{equation}
The representation vector $\vz_k$ for each vertex $v_k \in V$ can be found by minimizing the cross-entropy loss function.
% Notice that the exact computation of this conditional probability is computationally intractable for increasing lengths of the random
% walks, so in particular \cite{deepwalk} use a hierarchical softmax to approximate the computation. 

\begin{table}[t]
\caption{Notations.}
\resizebox{0.45\textwidth}{!}{
\begin{tabular}{c|c}
\toprule
Notation & Description \\ 
\midrule
$G$ & Input network \\
$V$ & Vertext set of $G$ with $\left | V \right | = N$ \\
$E$ & Edge set of $G$ with $\left | E \right | = M$ \\
$v_i$ & A node $v_i \in V$ \\
$e_{ij}$ & Edge $e_{ij} \in E$ connecting $v_i$ and $v_j$ \\
$d$ & Embedding dimension \\
$R$ & Walks per node \\
$L$ & Walk length \\
$C$ & Window size for Skipgram \\
%$\lambda$ & Trade-off to balance Exploitation \& Exploration \\
$\epsilon$ & A probability we take a random action in $\epsilon$-Greedy\\
$X(s)$ & The $s$-th node in the walk \\
%$\pi _{tx}$ & Transition probability from node $t$ to $x$ \\
$\vz_i$ & The embedding vector of $v_i$ \\
$\vw(n)$ & The occupation vector at step $n$ \\
%$N_S(u)$ & Neighbors of node $u$ with sampling strategy $S$ \\ 
\bottomrule
\end{tabular}
}
\label{tb:notations}
\end{table}

%\huan{Better to introduce formally graph embedding and random walk.}

\begin{figure*}[t]
\centering
\includegraphics[width=0.8\textwidth]{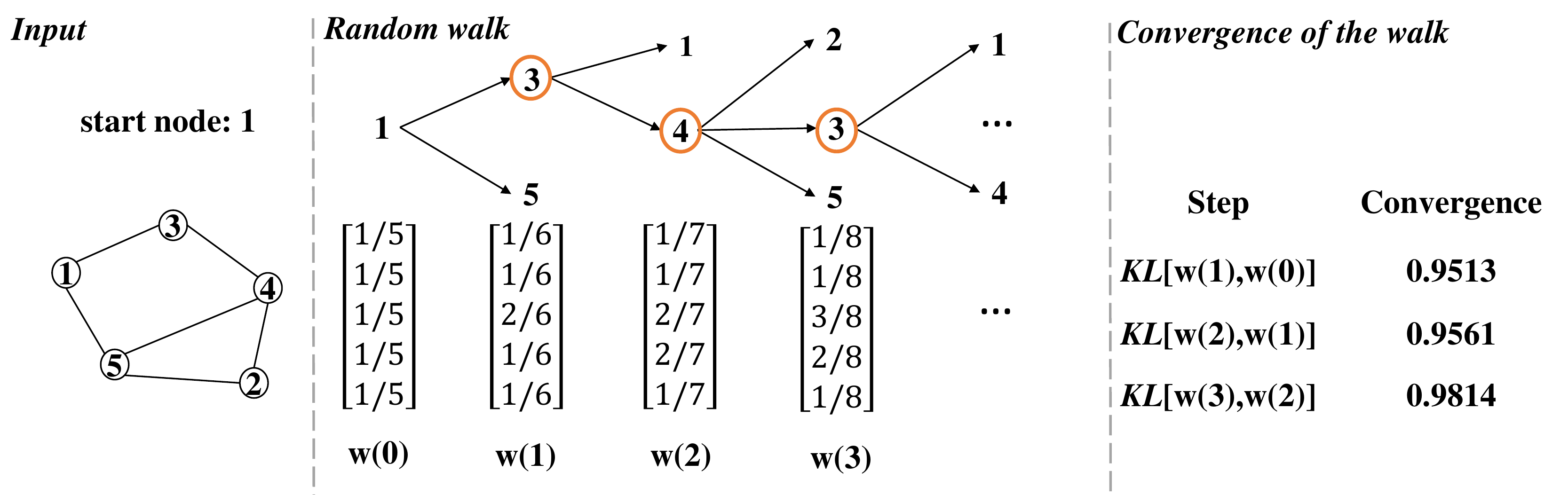}
\vspace{-0.15in}
\caption{\footnotesize We illustrate DRRW by this toy example. The inputs are a graph and a start node ``1''. The initial path is (``1''). The initial occupation vector $\vw(0)$ records the node distribution during the random walk process. When selecting the next node from "1", we have two neighbors: ``3'' and ``5''.  Suppose we add ``3'' to the current path, then we get a new occupation vector $\vw^3(0)$. Here, we calculate the KL divergence between $\vw(0)$ and $\vw^3(0)$ to get the exploitation score for ``3''. We get the exploitation score for ``5'' by the same process. We compare the two exploitation scores and choose the one with the maximum score. Since the exploitation scores for ``3'' and ``5'' are the same, we randomly select the next node and jump to ``3''. Now, the path becomes (``1''$->$ ``3'') and we update the occupation vector to $\vw(1)$. We then jump to ``4'' with random selection for the same reason, updating the path and the occupation vector. Looking at ``4'' 's neighbors: ``2'', ``3'', ``5'', we add ``3'' to the path as it can get the highest exploitation score. We select each neighbor by considering both the current node and the latest occupation vector step-by-step until the walk length reaches to $L$. We make use of the whole history during the walking process. The right part shows the convergence results.} 
\label{fg:drrw-exa}
\vspace{-0.1in}
%\label{fg-reinforcewalk}
\end{figure*}

\subsection{ The General Framework.}
%\huan{Briefly introduce the whole thing, including motivation, outline, etc.}
We sample the neighborhoods of a source node through a strategy called Exploitation-Exploration trade-off, motivated by \cite{xiao2019beyond}. The strategy has been most thoroughly studied through the multi-armed bandit problem and for finite state space MDPs in \cite{burnetas1997optimal}.
A brief introduction of the strategy is that it models an agent that simultaneously attempts to optimize the decisions given the current information (called ``exploitation'') and gather more information that might lead us to better decisions in the future (called ``exploration''). 
Specifically, at each step, we select the next node by combining the exploitation process based on VRRW or DRRW and exploration process based on $\epsilon$-Greedy or UCB.
%In addition, exploitation is related to local search while exploration is related to a global search. In the first one, we want to refine the solution and try to avoid big jumps on the search space, whereas in the second one we are interested in exploring the search space looking for good solutions.
In the following section, we will give the details on how we design the two processes and keep a balance of exploitation and exploration.

\subsection{Exploitation.}
This process tends to exploit the seemingly optimal neighbor, specifically selecting those with higher rewards by certain designed approaches. Here we give two approaches to calculate the rewards for each candidate neighbor, vertex-reinforced random walk (VRRW) and its extension, distribution-reinforced random walk (DRRW).
%We present exploitation reward, one is VRRW and another is DRRW
% \subsubsection{VRRW.}
\subsubsection{Vertex-Reinforced Random Walk.}
Vertex-reinforced random walk (VRRW), defined by~\cite{pemantle1992vertex}, is a class of non-Markovian discrete-time random processes on a finite state space. 
%Let $G$ be any homogeneous graph without loop. 
For any $x \in G$ and $V \subset G$, denote $x \sim V$ if there exists a vertex $v \in V$ such that $x \sim v$. Here $\sim$ represents the neighbor relation. For the process consists of a sequence of nodes $\mathcal{F}_n = X(0), X(1), X(2), ... X(n)$, VRRW defines the local time 
\begin{equation}
    Z(n, v) = 1+\sum_{s=1}^{n}\delta [ X(s) = v]
\end{equation}
to be the number of times the vertex $v$ has been visited by step $n$, plus 1.
$\delta[\cdot]$ denotes an indicator function where $\delta[true]=1$ and $\delta[false]=0$.
The transition probability for VRRW is defined as:
\begin{equation}
    P(X(n+1)=x | \mathcal{F}_n) = \delta[{x \sim X(n)}]\frac{Z(n,x)}{\sum_{y \sim X(n)}Z(n,y) },
\end{equation}
where $\delta[{x \sim X(n)}]$ indicates that we only count the probability when $x$ is a neighbor of $X(n)$.
% \begin{equation}
%     P(X(n+1)=x | \mathcal{F}_n) = \mathbf{1}^T\mathbf{1}_{x \sim X(n)}\frac{Z(n,x)}{\sum_{y \sim X(n)}Z(n,y) },
% \end{equation}
%where $\mathbf{1}$ is an all-one vector and $\mathbf{1}_{x \sim X(n)}$ is an indicator vector where 1 indicates the corresponding indexed node is a neighbor of $x$.
Therefore, the probability of a move  to a neighbor $x$ is proportional to the local time at $x$ at step $n$. We regard this local time as exploitation score $Q_x(n)$.
% where $\mathcal{F} = \phi(X(1), X(2), ...,X(n)$. 
% Therefore, a move to a neighbor $x$ is proportional to the local time at $x$ at step $n$.

\subsubsection{Distribution-Reinforced Random Walk.}
Distribution-Reinforced Random Walk (DRRW) is an extension of VRRW. An example is shown in Figure \ref{fg:drrw-exa}. The major difference between the two methods is that we do not simply make the walk intend to select those nodes visited before. Instead, the path would select those which can make the distribution of nodes in the path converge more quickly. The process of DRRW is also a non-Markovian process. We give a theoretical theorem and proof for the reason why we can also calculate the reward by measuring the convergence.

Before that, to measure the node distribution in the path, we use the fractional occupation vector, $\vw(n)$, which is obtained by processing the normalization of each node frequency in the path up to the step $n$, as shown in:
\begin{equation}
\label{eq:ov}
   \vw_i(n) = \frac{1}{n+N}\left (  1+\sum_{s=1}^{n}\delta [ X(s) = v_i ] \right ).
\end{equation}

\begin{theorem}
\label{thm:converge}
If the random walk process $\left \{ X(n) \right \}$ converges to a unique stationary distribution $\vx(n)$, it must converge to a point where $\vx(n) = \vw(n)$.
\end{theorem}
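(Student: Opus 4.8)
The plan is to set up the dynamics of the occupation vector $\vw(n)$ as a stochastic approximation (Robbins--Monro) recursion and then invoke the standard principle that the only possible limit points of such a recursion are the zeros of its mean-field vector field. First I would write down the one-step update for $\vw(n)$ from Equation~\eqref{eq:ov}: since exactly one coordinate of the frequency count increments at each step, one gets
\begin{equation*}
\vw(n+1) = \vw(n) + \frac{1}{n+1+N}\bigl(\ve_{X(n+1)} - \vw(n)\bigr),
\end{equation*}
where $\ve_{v}$ is the standard basis vector for vertex $v$. This is exactly the form $\vw(n+1) = \vw(n) + \gamma_n\bigl(F(\vw(n)) + \text{noise}\bigr)$ with step size $\gamma_n = 1/(n+1+N)$ satisfying $\sum\gamma_n = \infty$, $\sum\gamma_n^2 < \infty$, and with conditional mean drift $F(\vw(n)) = \E[\ve_{X(n+1)}\mid\mathcal{F}_n] - \vw(n)$.

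Next I would identify $\E[\ve_{X(n+1)}\mid\mathcal{F}_n]$ with the transition step of the walk: its $i$-th component is $P(X(n+1)=v_i\mid\mathcal{F}_n)$, which for VRRW (or DRRW) is a function of the current occupation vector $\vw(n)$ and the current position, call it $\Phi(\vw(n))$. The hypothesis of the theorem is that $\{X(n)\}$ converges to a unique stationary distribution $\vx(n)$ (I read $\vx(n)\to\vx^\ast$); stationarity means precisely that $\vx^\ast$ is a fixed point of the associated transition map, i.e. $\Phi(\vx^\ast) = \vx^\ast$, so $F(\vx^\ast) = \vzero$. Conversely, the empirical occupation measure of any convergent trajectory must also converge to that same limit, because the Cesàro average of a convergent sequence converges to the same limit; combined with the stochastic-approximation statement that $\vw(n)$ tracks the ODE $\dot{\vw} = F(\vw)$ and hence concentrates on its equilibrium set, this forces $\vw(n)$ and $\vx(n)$ to share the limit point, which is the assertion $\vx(n) = \vw(n)$ in the limit.

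I would then make the argument self-contained by spelling out why the occupation vector and the walk's marginal must agree: at a stationary regime the probability of visiting $v_i$ on the next step equals $\vx^\ast_i$, so by a martingale / law-of-large-numbers argument the long-run fraction of visits to $v_i$, which is $\vw_i(n)$ up to the vanishing additive constant $N/(n+N)$, also converges to $\vx^\ast_i$; hence $\vw(n) \to \vx^\ast$ as well and the two coincide in the limit. The main obstacle is the rigorous justification that "$\{X(n)\}$ converges to a stationary distribution" actually pins down $F$'s zero — one must be careful that the limiting object is a genuine fixed point of the (path-dependent) transition kernel rather than merely a weak limit of marginals, and that the noise term $\ve_{X(n+1)} - \E[\ve_{X(n+1)}\mid\mathcal{F}_n]$ is a bounded martingale-difference sequence so the Robbins--Monro conclusion applies; I would handle this by noting the state space is finite (bounded noise, trivially) and by appealing to the convergence hypothesis to guarantee the transition kernel itself stabilizes along the trajectory.
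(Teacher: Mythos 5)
Your proposal is correct and follows essentially the same route as the paper: both reduce the claim to the mean-field dynamics of the occupation vector, whose drift is proportional to $\vx(n) - \vw(n)$, so the only possible equilibrium is $\vx(n) = \vw(n)$. The paper obtains this ODE via a block-averaging heuristic ($\vw(n+L) \approx \frac{n\vw(n)+L\vx(n)}{n+L}$ followed by a limit in $L$), whereas you derive it from the exact one-step Robbins--Monro recursion with a martingale-difference noise term; your version is the more careful rendering of the same argument.
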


\begin{proof}[Proof of Theorem \ref{thm:converge}]
%\huan{I think it is better to refer to the proof in the original paper, say like Following the proof in XXX. We are actually copying the proofs.}
    Consider the behavior of the random walk process at step $n \gg 1$ and some time $n + L\gg n$ where $n > L \gg 1$. If the process $\left \{ X(n) \right \}$ converges to a unique stationary distribution $\vx(n)$, we have:
\begin{equation}
\label{eq:kl}
\begin{aligned}
    \vw(n+L) &\approx \frac{n\vw(n)+L\vx(n)}{n+L} \\
    &= \vw(n) + \frac{L}{n+L}(\vx(n)-\vw(n)).
\end{aligned}
\end{equation}

In a continuous time limit $L\rightarrow 0$, we have:
\begin{equation}
\label{eq:js}
\begin{aligned}
    \frac{d\vw(n)}{dL} &\approx \lim_{L\rightarrow 0}\frac{\vw(n+L)-\vw(n)}{L} \\
    &= \frac{1}{n}\left ( \vx(n) - \vw(n) \right ).
\end{aligned}
\end{equation}

Hence, the convergence of the random walk to a fixed point is equivalent to the convergence of the occupation vector $\vw$ for the stochastic process. 
\end{proof}

%The proof follows the spacey random walk theory \cite{BensonGL17}. 
For every step, our aim is to select the neighbor which is able to maximize the possibility of the occupation vector $\vw$ to converge. Here, we give two methods to measure the convergence between two distributions, $\vw(n)$ for current path and $\vw^x(n+1)$ for adding each neighbor node $x$ to the current path.
\begin{itemize}
    \item \textbf{Kullback-Leibler Divergence}
            \begin{equation}
            \begin{aligned}
            Q_x(n)' &=   D_{KL}(\vw(n)\left |  \right | \vw^x(n)) \\
            &=  \sum_{i=1}^N \vw_i(n) \text{log}(\frac{\vw_i(n)}{\vw^x_i(n)}). 
            \end{aligned}
            \end{equation}
    \item \textbf{Jensen-Shannon Divergence}
            \begin{equation}
            \begin{aligned}
             Q_x(n)' &=  \frac{1}{2}D_{KL}(\vw(n)\left |  \right | \vm) \\
             &+\frac{1}{2}D_{KL}(\vw^x(n)\left |  \right | \vm)
             \end{aligned}
            \end{equation}
            where $\vm = \frac{1}{2}\left [ \vw(n) + \vw^x(n) \right ]$.
\end{itemize}
% \huan{references for KL and JS.}
The lower the divergence value, the better we have matched the next distribution with the current distribution. Therefore, the exploitation score is: $Q_x(n) = 1- Q_x(n)'$.

% \subsubsection{Vertex-Reinforced Random Walk.}

%\subsubsection{Distribution-Reinforced Random Walk.}

To sum up, we tend to select those that achieve higher exploitation scores by VRRW or DRRW. 

\subsection{Exploration.}

\begin{figure}[t]
\centering
\includegraphics[width=0.49\textwidth]{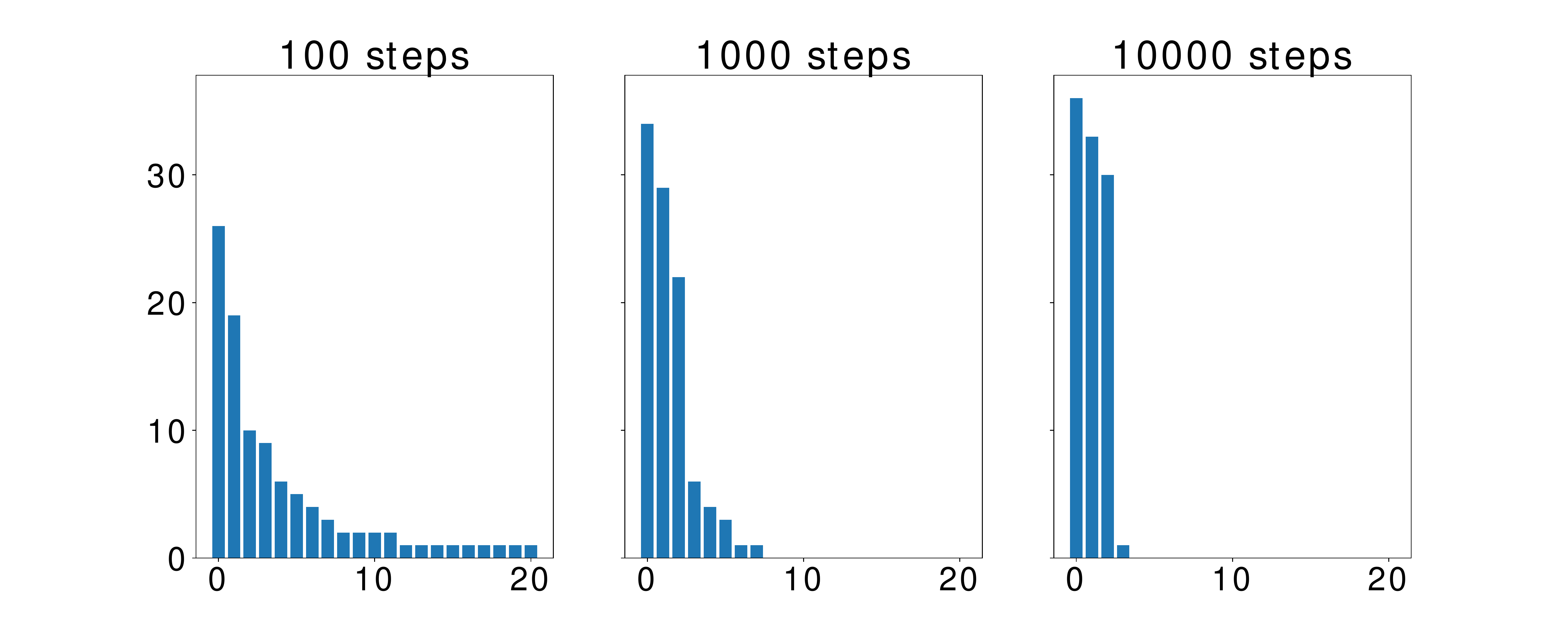}
\vspace{-0.3in}
\caption{\footnotesize An example for node frequency of the last 100 steps after 100, 1000, 10000 steps on the PPI dataset. The $x$ axis denotes the number of existing nodes (not the real node IDs) and the $y$ axis denotes the frequencies of each node. We can see after 10000 steps, only 4 nodes appeared in the last 100 steps and three nodes occupy mostly. It indicates that the random walk gets stuck to 3 nodes.}
\vspace{-0.1in}
\label{fig:node-dis}
\end{figure}

\label{sec-exploration}
On arbitrary graphs, \cite{volkov2006phase} proves that VRRW localizes with positive probability on some specific finite subgraphs.
\begin{theorem}
\label{the-trapping}
Let $T$ be a strongly trapping subset of $(G, \sim)$. Then the VRRW has asymptotic range $T$ with a positive probability.
\end{theorem}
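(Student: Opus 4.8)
The plan is to follow the localization method of \cite{volkov2006phase}, viewing the normalized occupation vector $\vw(n)$ of Eq.~\eqref{eq:ov} as a stochastic approximation process: its one-step increments are of order $1/n$, and, on the event that the walk currently lives inside a connected subgraph $T$, their conditional mean is $\tfrac1n$ times a drift field whose relevant equilibria are occupation profiles $\vw^\star$ vanishing off $T$. The hypothesis that $T$ is strongly trapping should be precisely the statement that such a $\vw^\star$ is a strictly stable equilibrium of this field and, moreover, that for the VRRW confined to $T$ the local times $Z(n,v)$ of the vertices of $T$ adjacent to $V\setminus T$ grow linearly in $n$, while every vertex of $V\setminus T$ can be kept at its initial local time. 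I would first isolate these facts about the $T$-restricted dynamics as a lemma; this is the step where Theorem~\ref{thm:converge}, applied to the walk reflected inside $T$, is used to identify the limiting occupation vector (and where one notes that $\vw^\star$ may still vanish at some vertices of $T$ that are visited infinitely often).

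The proof then splits into two steps. Step one: show that with positive probability the walk reaches, at some finite time $n_0$, a ``good configuration'' --- it sits at a vertex of $T$, no vertex of $V\setminus T$ has ever been visited, and the restriction of $\vw(n_0)$ to $T$ lies in the basin of attraction of $\vw^\star$. Since every VRRW transition to a neighbor has strictly positive probability, it is enough to exhibit one finite move-sequence staying inside $T$ that visits the vertices of $T$ in proportions close to $\vw^\star$, and then take $n_0$ large so the residual fluctuations are negligible. Step two: conditioned on the good configuration at time $n_0$, bound the probability that the walk ever leaves $T$. A leak can occur only when the walk is at an inner-boundary vertex $v\in T$ and jumps to some $u\sim v$ with $u\in V\setminus T$; given no leak has occurred yet, the conditional probability of this at step $n$ is $\frac{Z(n,u)}{\sum_{y\sim v}Z(n,y)}\le\frac{1}{1+\sum_{y\sim v,\,y\in T}Z(n,y)}$. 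Using the linear growth $\sum_{y\sim v,\,y\in T}Z(n,y)\ge c\,n$ supplied by the lemma, together with the fact that visits to $v$ are sparse enough that $\sum_{\text{visits to }v}\frac{1}{c\,n}$ converges, a union bound over the finitely many boundary edges and all $n\ge n_0$ shows the total leak probability is finite, and can be made $<1$ by enlarging $n_0$. On the complementary event the walk never leaves $T$, so $T$ is its asymptotic range with positive probability.

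The main obstacle is the local-time estimate underlying Step two: making rigorous that the denominators $\sum_{y\sim v,\,y\in T}Z(n,y)$ grow essentially linearly relative to how often the inner-boundary vertex $v$ is visited, so that the escape series converges. This is exactly where the strongly-trapping hypothesis must be used in full, and it does not follow from Theorem~\ref{thm:converge} alone --- one needs quantitative control of the reinforced dynamics near $\vw^\star$ (for instance a Lyapunov function for the drift field, or the stable-manifold estimates for stochastic approximation used in \cite{volkov2006phase}). A related difficulty is avoiding circularity between ``the occupation vector stays near $\vw^\star$'' and ``the walk never leaves $T$''; I would resolve this by coupling the genuine VRRW with the VRRW reflected to stay in $T$, so that the a.s.\ linear growth of the relevant local times is established unconditionally for the reflected process and then transferred to the original walk up to the first (possibly never) leak time.
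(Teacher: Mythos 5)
First, a point of calibration: the paper does not actually prove Theorem~\ref{the-trapping}. It is imported verbatim from \cite{volkov2006phase}, and the only support offered in the text is the empirical illustration in Figure~\ref{fig:node-dis}. So your plan is being measured against Volkov's external argument rather than anything in this paper. At that level, your two-step template --- (i) reach a ``good configuration'' inside $T$ with positive probability, using that every admissible VRRW transition has strictly positive probability, and (ii) show by a union/Borel--Cantelli bound over boundary edges that the walk never leaves $T$ thereafter --- is indeed the standard localization strategy and has the right shape.

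There are, however, genuine gaps. The core lemma (linear growth of the relevant local times, plus sparseness of visits to the inner boundary) is exactly where the ``strongly trapping'' hypothesis does all the work, and it cannot be supplied by Theorem~\ref{thm:converge}: that theorem's proof in the paper is a heuristic ODE-limit computation that already \emph{assumes} convergence, so it gives no quantitative control of the dynamics near $\vw^\star$. You acknowledge this, but it means the proposal is a plan rather than a proof. More concretely, your lemma as stated is internally inconsistent with your Step two: you assert that the vertices of $T$ adjacent to $V\setminus T$ have \emph{linearly} growing local times, yet the escape series $\sum_{\text{visits to }v}\frac{1}{c\,n}$ converges only if the inner-boundary vertex $v$ is visited sub-linearly, with visit times $n_k$ growing fast enough in $k$; if $v$ is visited with positive frequency the series behaves like $\sum_k 1/k$ and the bound fails. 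In Volkov's trapping subgraphs it is the neighbors of $v$ \emph{inside} $T$ that carry linear local time, while the inner-boundary vertices themselves are visited only $o(n)$ times; these two roles must be separated, and establishing the sparseness is the hard analytic step. Finally, conditioning on ``no leak so far'' changes the law of the walk and is not the same as the reflected process, so the coupling you mention at the end is not an optional repair --- it has to be built in before the local-time estimates are derived, not invoked afterwards.
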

We give a example of random walk with VRRW running 100, 1,000 and 10,000 steps, shown in Figure \ref{fig:node-dis}. It further proves the getting stuck problem.
Therefore, it is also needed to do the exploration, taking different actions to skip out of this trap.
We design the exploration mechanism in two ways: $\epsilon$-Greedy Algorithm and Upper Confidence Bounds.

\subsubsection{$\epsilon$-Greedy Algorithm.}
The main idea of the $\epsilon$-Greedy algorithm is to take the best action most of the time, but do random exploration occasionally. 
In detail, with a probability $\epsilon$ we randomly select the neighbors without any bias. Meanwhile, with probability $1-\epsilon$ we select neighbors which are visited more often before (VRRW) or can make the node distribution in the path converge more quickly (DRRW).

\subsubsection{Upper Confidence Bounds.}
In the UCB algorithm, we always select the greedy action to maximize the upper confidence bound.
Here we use the upper confidence bounds one (UCB1)~\cite{kocsis2006bandit} value to determine the next move from a viewpoint of multi-armed bandit problem. The selection strategy is defined by: 

\begin{equation}
\label{eq:ucb}
    s_x(n) = Q_x(n) + U_x(n),
    %&a = \argmax_v (s(v)),
\end{equation}
where $Q_x(n)$ denotes the exploitation reward of node $x$ at step $n$ and $U_x(n)$ is the exploration score of node $x$ at step $n$. 
% \huan{If you remove $\lambda$, you need to remove it in Table~\ref{tb:notations}.}

For our random walk process, we can add the exploration score $U_x(n)$ in order to push the path to explore new nodes. We design the exploration mechanism to get $U_x(n)$ for node $v$ as follows:
        \begin{equation}
        U_x(n) = \sqrt{\frac{\text{log} (Z(n, u))}{Z(n, x)}},
       \label{eq-ucb-u}
        \end{equation}
where $Z(n, u)$ and $Z(n, x)$ are denoted as 1 plus the times that the start node $u$ and the neighbor node $x$ of current node $X(n)$ have been visited up to step $n$, respectively. 
%\huan{explain $\lambda$}

%This equation clearly expresses the exploration-exploitation trade-off: while the first term of the sum tends to exploit the seemingly optimal arm, the second term of the sum tends to explore less pulled arms. $\lambda$ is used to balance the two terms.

%\begin{itemize}
%    \item Firstly, inspired by the exploring process in reinforcement learning, we follow the random walk in DeepWalk %\cite{deepwalk} with a high probability at the beginning, purely selecting the neighbors without any bias. With the steps %increasing, the probability of following DeepWalk process decays and the path tends to select neighbors which could make the %distribution converge more quickly.
%    \item Second method to explore is that we can add the explored reward $U_v(n)$ in order to push the path to explore new %states/nodes. We design the exploration mechanism to get the explored reward $U_n(v)$ for node $v$ as follows:
%        \begin{equation}
%        U_v(n) = \sqrt{\frac{\text{log} (T_u(v))+1}{T_v(n)+ 1}},
%       \label{eq-ucb-u}
%        \end{equation}
%        where $T_u(n)$, $T_v(n)$ denote as the times that the source node $u$ and the current node $v$ have been selected  at %search step $n$, respectively.
%\end{itemize}

%Exploration is also needed because there is always uncertainty about the accuracy of the action-value estimates. 
%The greedy actions are those that look at the present, but some of the other actions may be better. Therefore, sometimes it’s worth to take different actions to get new data.

\textbf{Continuous Approximation as Transition Probability.}
After calculating the exploitation and exploration scores for each neighbor, a natural way to choose the next node in the path is to pick the neighbor with the maximum sum score by Eq.~(\ref{eq:ucb}).  
Following this way, we can only get one path no matter the paths number increases since it is a deterministic approach. 
It is easy to suffer from local optimal or stuck in a loop for both VRRW and DRRW. 
To handle this, we sample the next node with a continuous transition probability by adding a softmax layer as:
\begin{equation}
\label{eq-trans}
    P(X(n+1)=x | \mathcal{F}_n) = \frac{\text{exp}(s_x(n))}{\sum_{y \sim X(n)} \text{exp}(s_y(n))}.
\end{equation}
This process can also be regarded as a way to balance of exploration and exploitation since picking the neighbor with the maximum sum score simply tends to the direction for the highest current reward while using the softmax function to make transition probability enables the path to explore diverse states.

\begin{algorithm}[t]
    \caption{The reinforce2vec algorithm.} \label{alg-1}
    {
        \begin{algorithmic}
            \STATE {\bf LearnEmbeddings:} (Graph $G = (V, E)$. Dimensions $d$, Walks per node $R$, Walk length $L$, Window size $C$. )
            \STATE Initialize walks to Empty
            \WHILE {Iteration $r < R$}
            {
                \FORALL {nodes $u \in V$}
                    \STATE $walk = \text{DRRW}(G, u, L)$
                    \STATE {Append $walk$ to $walks$}
                \ENDFOR
                
            }
            \ENDWHILE
        \STATE {$\left \{ \vz_i \right \}$ = Skip-gram $(C, d, walks)$ }
        \end{algorithmic}
    
    }\vspace{-0.01in}
\end{algorithm}

\begin{algorithm}[t]

    \caption{DRRW.} \label{alg-2}
    {
        \begin{algorithmic}
            \STATE {\bf LearnEmbeddings:} (Graph $G = (V, E)$. Start node $u$. Walk length $L$. )
            \STATE Initialize walks to [u]
            \STATE Initialize occupation vector $\vw$ by Eq.~\ref{eq:ov}
            \FORALL {$walk\_len = 1$ to $L$}
                \STATE $curr = walk[-1]$
                \STATE $N_{curr} = GetNeighbors(cur, G)$
                \STATE Get transition probabilities $P_{curr}$ from Eqs.~(\ref{eq:kl}) $\sim$ (\ref{eq-trans})
                \STATE $s$ = Sample($N_{curr}$, $\pi_{curr}$, $\vw$)
                \STATE Append $s$ to $walk$
                \STATE Update $\vw$
                %\STATE {Append $walk$ to $walks$}
            \ENDFOR
        \end{algorithmic}
    
    }\vspace{-0.01in}
\end{algorithm}

\subsection{Algorithm.}
%We denote the source node $u$ as starting node as $c_0$ and denote $c_n, l\in \left [ 1, L \right ]$ as the $n$-th node added for $u$ in the path.
% At search step $n$, the node $c_{n+1}$ is retrieved from the neighbors of node $c_n$. We calculate the transition probability according to Eq.~\ref{eq:ucb} and applying it to softmax.

%where $Q_v(n)$ denotes the  exploitation reward of node $v$ and $U_v(n)$ is the utility to explore node $v$ at time $n$. As we can see, $g(Z(n,v)) = Q_v(n)$ while $U_v(n)$ is added to change the distributed-reinforced process for certain purpose which will be explained in Section \ref{sec-exploration}. 
The pseudocode for reinforce2vec based on DRRW with UCB is given in Algorithm~\ref{alg-1} and the details of the designed random walk for every single path are shown in Algorithm~\ref{alg-2}.
%We initialize paths for users by randomly selecting users in the social graph and set users' explored times by the times they selected as friends. For each user in the $t$-th day training, we first explore $B$ sets of friends by MCTS strategy and update the explored times for these selected friends, consequently updating the exploration values $U_t(v)$ of them. These $B$ sets of friends are incorporated with the user as input to the RS model. Then we update $Q_t(u)$ of the target user. The updated results are used for the $t+1$-th day training.

\subsection{Complexity Analysis.}
For the random walk phase in a network $G = (V, E)$ with $R$ and $L$, the reinforced walk for each step takes $O(L)$ because for each step we can divide the neighbors into two groups: those visited in the current path and those not. We compute the exploration and exploitation scores for those visited. This process is linear to the maximum path length $L$. For those not visited, the exploration and exploitation scores are the same. We only need to do the computation once. The complexity of softmax layer is $O(L)$. Thus, the total walk complexity is $O(R \times L^2)$. For the training phase, we use a Skip-gram model to train the embeddings, which also has linear complexity and can be parallelized by using the mechanism as in word2vec \cite{mikolov2013efficient}. Overall, the proposed random walk is scalable for a large-scale graph.

\section{Experiment}
In this section, we evaluate the proposed methods by two downstream tasks: link prediction and node classification.

% to answer the following questions:
% \begin{enumerate}
%     \item Can the proposed non-Markovian random walk based graph embedding approach outperform the baselines?
%     \item Is it effective to apply the exploitation-exploration strategy to solve the "stuck" problem of VRRW?
% \end{enumerate}

\subsection{Experimental Settings.} 
\subsubsection{Baselines.}
\label{sec-setting}
We compare our model with popular approaches, including first-order random walk (DeepWalk~\cite{deepwalk}, LINE~\cite{tang2015line}),
second-order random walk (node2vec~\cite{node2vec}), and matrix factorization approaches (GraRep and HOPE~\cite{cao2015grarep, ou2016asymmetric}).
We denote our algorithms as reinforce2vec-v for embedding with VRRW and reinforce2vec-d for embedding with  DRRW. If we do not mention, reinforce2vec is reinforce2vec-d, which uses DRRW with JS divergence as well as UCB exploration.
%There are many other network embedding methods, but we do not consider them here, because their performance is inferior to these baseline models as shown in corresponding papers. 

%The descriptions of these models are as follows.
%\begin{itemize}
%    \item DeepWalk \cite{deepwalk}: DeepWalk first transforms the network into node sequences by truncated random walk, and then uses it as input to the Skip-gram model to learn representations.
%    \item LINE \cite{tang2015line}: LINE can preserve both first-order and second-order proximities for undirected graph through modeling node co-occurrence probability and node conditional probability.
%    \item GraRep \cite{cao2015grarep}: GraRep preserves high-order proximities by constructing different k-step probability transition matrix.
%    \item node2vec \cite{node2vec}: node2vec develops a biased random walk procedure to explore the neighborhood of a node, which can strike a balance between local properties and global properties of a network.
%    \item HOPE \cite{ou2016asymmetric}: HOPE preserves asymmetric transitivity by approximating high-order proximity.
% \end{itemize}
% We follow the original settings argued in papers. 
%We exclude GraRep~\cite{cao2015grarep}, AttentionWalk~\cite{abu2018watch} since they are unable to efficiently scale to large networks.
Note that, in the experiments, the network embeddings are learned with only the graph structure (nodes and edges), without observing node features nor labels during training. 
All baselines and our model belong to unsupervised network embedding. 
Therefore, those semi-supervised learning approaches like GCN \cite{kipf2017semi} are not considered for comparison.

\subsubsection{Parameter settings.}  
For DeepWalk and node2vec, windows size $C=10,R=80, L=40$.
For node2vec, $p$, $q$ are chosen from $\left \{ 0.25, 0.50, 1, 2, 4 \right \}$.
For LINE, the size of negative-samples is $K = 5$. 
For HOPE, $\beta$ is set to 0.01.
Other parameters follow the original settings in corresponding papers. 
For reinforce2vec, we set $\epsilon = 0.5, R=80, L=40, C=10, d=64$. 
%All experiments were conducted on Linux (CentOS release 6.9), Python 3.7 from Anaconda 4.6.14.

\subsection{Link Prediction Task.}
Link prediction is a challenging task applied in many areas like information retrieval, recommendation system, and social networks. It is used to evaluate the structure-preserving properties of embedding. We compare our method with baselines in the manner of \cite{node2vec}: randomly remove a fraction $(=50\%)$ of graph edges while ensuring that the residual network is connected and then learn embeddings from the remainings. 
We study several binary operators in Table \ref{tb:operations} to construct features for an edge based on its two node vectors, and then the operated feature vector for an edge is as the input to a logistic regression classifier.
\begin{table}[t]
\centering
\caption{Binary operations for learning edge features. The definitions correspond to the $i$-th component of vector $\vu$ and $\vv$.}
\label{tb:operations}
\resizebox{0.45\textwidth}{!}{%
\begin{tabular}{c|c|c|c|c}
\toprule
Operation & Average & Hadamard & Weighted-L1 & Weighted-L2 \\
\midrule
Definition & $(\vu_i + \vv_i)/2$ & $\vu_i \ast \vv_i$ & $\left | \vu_i -\vv_i \right |$ & $\left | \vu_i -\vv_i \right |^2$\\
\bottomrule
\end{tabular}
}
\end{table}

\vspace{-0.2in}

\subsubsection{Datasets.}
For the link prediction task, we use two public datasets: Facebook and ca-AstroPh as follows, details shown in Table \ref{tb:lp-data}. 
\begin{itemize}
    \item Facebook~\cite{leskovec2014snap}: Facebook dataset is a social network representing a friendship relation between any two users.
    %\item wiki-vote: The network contains all the Wikipedia voting data from the inception of Wikipedia. Nodes in the network represent Wikipedia users and directed edges represent voting actions by users.
    \item ca-AstroPh  ~\cite{leskovec2014snap}: This is a collaboration network of Astrophysics which is generated from papers submitted to the arXiv.
\end{itemize}
% Table \ref{tb:lp-data} describes the statistics of the datasets. 

\begin{table}[t]
    \centering
    \caption{ Datasets for link prediction.}
\begin{tabular}{c|cc}
\toprule
Dataset & Facebook & ca-AstroPh \\
\midrule
$|V|$ & 4,039 & 17,903 \\
$|E|$ & 88,234 & 197,031 \\
Nodes Types & users & researchers \\
Edges Types & friendship & co-authorship \\
\bottomrule
\end{tabular}
    \label{tb:lp-data}
\vspace{-0.2in}
\end{table}

\begin{table*}[t]
\centering
\caption{Area Under Curve (AUC) scores for link prediction. Comparison with baselines and our method (reinforce2vec) using binary operators: Hadamard, Average, 
Weighted-L1 and Weighted-L2. The definition is shown in Table \ref{tb:operations}.}
\resizebox{1\textwidth}{!}{%
\begin{tabular}{c|cccc|cccc}
\toprule
 & \multicolumn{4}{c|}{Facebook} & \multicolumn{4}{c}{ca-AstroPh} \\
 \midrule
Operation & Hadamard & Average & Weighted-L1 & Weighted-L2 & Hadamard & Average & Weighted-L1 & Weighted-L2 \\
\hline
Deepwalk & 96.8 & 72.4 & 95.7 & 95.8 & 93.4 & 70.7 & 82.8 & 83.1 \\
Node2vec & 96.8 & 72.7 & 96.0 & 96.1 & 93.7 & 72.2 & 84.7 & 84.8 \\
LINE & 95.9 & 70.3 & 93.8 & 94.1 & 89.0 & 65.2 & 88.1 & 88.6 \\
HOPE & 94.9 & 42.1 & 92.8 & 94.3 & 93.1 & 34.6 & 89.2 & 90.8 \\
Grarep & 94.7 & 39.8 & 76.0 & 84.7 & 87.7 & 41.3 & 87.5 & 83.4 \\
reinforce2vec & 95.3 & 56.1 & 98.9 & \textbf{98.9} & 84.0 & 63.3 & 97.4 & \textbf{97.5} \\
\hline
Gain of reinforce2vec$\left [ \% \right ]$ & \multicolumn{4}{c|}{2.2} & \multicolumn{4}{c}{4.1} \\ 
\bottomrule
\end{tabular}
}
\label{tb:lp-result}
\vspace{-0.2in}
\end{table*}

\begin{table}[t]
\centering
\caption{Performance for different variants of VRRW for link prediction on Facebook dataset.}
\resizebox{0.45\textwidth}{!}{%
\begin{tabular}{c|c|c|c|c|c}
\toprule
 \multicolumn{2}{c|}{Variant} & Hadamard & Average & l1 & l2  \\
 \midrule
\multirow{3}{*}{reinforce2vec-v} & w/o exploration & 70.6 & 53.9 & 72.8 & 73.8 \\
 & $\epsilon$-Greedy & 94.3 & 59.2 & 95.4 &  95.7 \\
 & UCB & 96.2 & 54.9 & 96.2 &  96.4  \\
 \hline
\multirow{3}{*}{reinforce2vec-d:KL} & w/o exploration & 66.8 & 51.2 & 62.5 & 63.4  \\
 & $\epsilon$-Greedy & 93.8 & 56.0 & 98.3 & 98.4  \\
 & UCB & 94.5 & 53.8 & 98.3 & 98.4  \\
 \hline
\multirow{3}{*}{reinforce2vec-d:JS} & w/o exploration & 70.0 & 53.5 & 61.0 & 61.7  \\
 & $\epsilon$-Greedy & 96.1 & 63.5 & 98.6 &  98.7 \\
 & UCB & 95.3 & 56.1 & 98.9 & \textbf{98.9}   \\
 \bottomrule
\end{tabular}
}
\vspace{-0.1 in}
\label{tb:lp-result-ablation}
\end{table}

\vspace{-0.1in}

\subsubsection{Performance Evaluation.}
We show the values of AUC (Area Under Curve) scores (one for each binary operator) in Table \ref{tb:lp-result}. 
We have the following observations: 1)  HOPE, GraRep, and LINE perform relatively worse in link prediction, as they are not good at capturing the pattern of edge existence in graphs. 
2) DeepWalk and node2vec are better than other baselines. This is probably because DeepWalk and node2vec are both random walk based models, which are better at extracting proximity information among vertices. 
3) reinforce2vec outperforms all the baselines in link prediction. Specifically, reinforce2vec improves AUC scores on Facebook and ca-AstroPh by
2.2\% and 4.1\% compared with the best baselines, respectively. 
%Besides, it proves the correctives and effectiveness of applying time-variant transition probability for random walk, compared with previous works which all regard it as a constant variable.

In addition to comparison with baselines, we analyze the performance for different vertex-reinforced random walk based methods on Facebook dataset. 
The results are shown in Table~\ref{tb:lp-result-ablation}.
Here we have the following observations: 1) All DRRW based methods (reinforce2vec-d:KL, reinforce2vec-d:JS) perform better than VRRW based method (reinforce2vec-v) under different exploration strategies. This indicates that it is reasonable and more effective to directly reinforce the convergence of the path during random walk than just use the frequency of each node. 
2) Comparing methods with or without exploration, it indicates that the designed exploration ways ($\epsilon$-Greedy and UCB) can effectively help the random walk process skip out of the trap. Consequently, the whole performances improve a lot after handling the getting stuck problem. 
3) For different ways of calculating convergence in DRRW, the performances of KL divergence and JS divergence are comparable.

\begin{figure*}[t]
\centering
\includegraphics[width=0.95\textwidth]{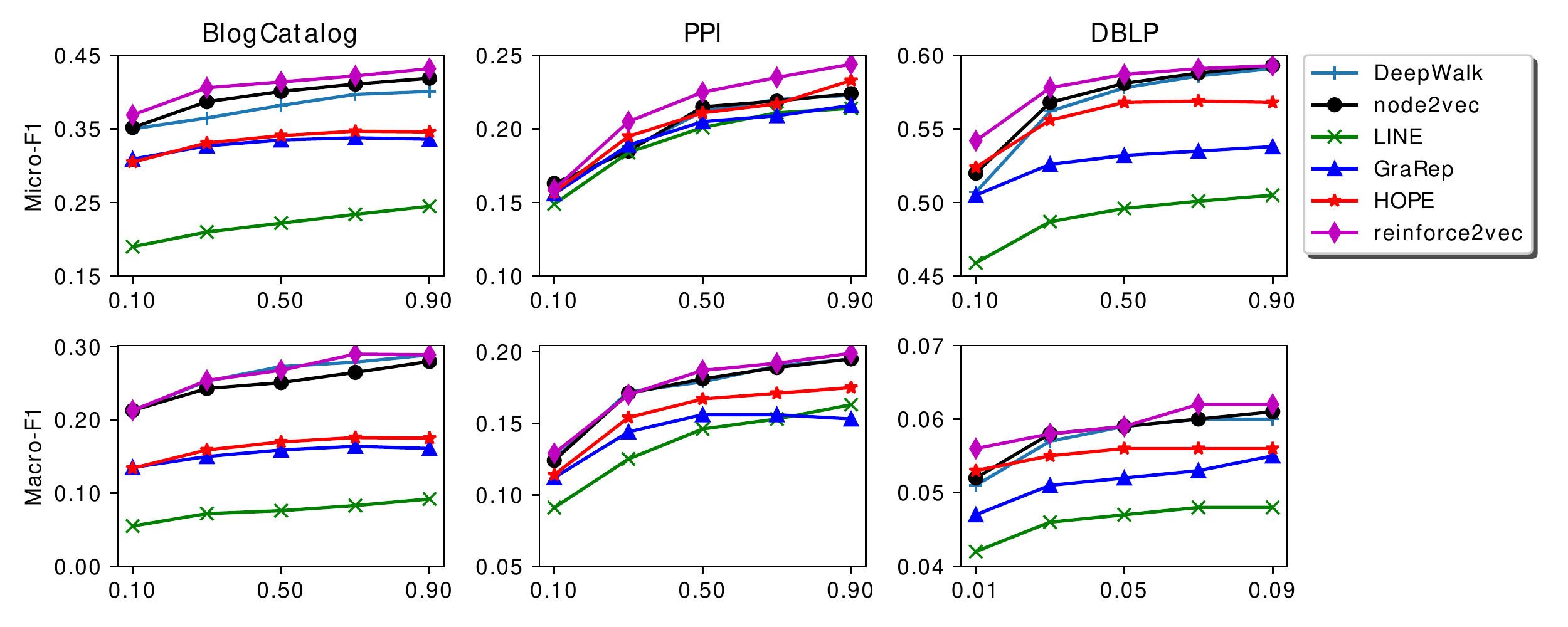}
\vspace{-0.1in}
\caption{Node classification task. Performance evaluation of different benchmarks on varying the amount of labeled data used for training. The x-axis denotes the fraction of labeled data, whereas the y-axis in the top and bottom rows denotes the Micro-F1 and Macro-F1 scores respectively.}
\label{fig:node-class}
\vspace{-0.2in}
\end{figure*}

\subsection{Node Classification Task.}
In the multi-label classification setting, every node is assigned one or more labels from a finite set. 
To conduct the experiment, we train reinforce2vec and baselines on the whole graph to obtain vertex representations and use a one-vs-rest logistic regression classifier as a classifier to perform node classification.
%During the training phase, we observe a certain fraction of nodes and all their labels and then use the embeddings of the labeled nodes as input features for a one-vs-rest logistic regression classifier. 

\subsubsection{Datasets.}
We use three datasets as follows for the task of node classification, details shown in Table \ref{tb:nc-data}.
\begin{itemize}
    \item BlogCatalog~\cite{zafarani2009social}: BlogCatalog is a network of social relationships of the bloggers.
    \item Protein-Protein Interactions (PPI)~\cite{breitkreutz2007biogrid}: PPI is a subgraph of the PPI network for Homo Sapiens. Node labels are extracted from hallmark gene sets and represent biological states.
    \item DBLP \cite{tang2008arnetminer}: DBLP is an academic citation network where authors are treated as nodes and their dominant conferences as labels.
\end{itemize}

\begin{table}[t]
    \centering
    \caption{Datasets for node classification.}
    \label{tb:nc-data}
    \begin{tabular}{c|ccc}
    \toprule
    Dataset & $\left | V \right |$ & $\left | E \right |$ & Labels \\
    \midrule
    Blogcatalog & 10,312 & 333,983 & 39 \\
    PPI & 3,890 & 76,584 & 50\\
    DBLP & 51,264 & 127,968 & 60 \\
    \bottomrule
    \end{tabular}
    \vspace{-0.2in}
\end{table}

%\vspace{-0.1in}
\subsubsection{Performance Evaluation.}
From Table \ref{tb:nc-result} we can see that our proposed algorithm consistently outperforms all the baselines on BlogCatalog, PPI, and DBLP. Specifically,  reinforce2vec achieves gains of 3.1\%, 4.7\%, and 1.0\% on Micro-F1 scores, compared with node2vec, the best baseline among three datasets. 
This indicates that the random walk applied in reinforce2vec is directly designed to optimize the equality of the whole path and can effectively encode the information of vertices into the learned representations.
%Moreover, we can improve the performances by approximate 1\% by adding the exploration score $U$. 
For a more fine-grained analysis, we also compare performance while varying the train-test split from 10\% to 90\% on both BlogCatalog and PPI datasets. For DBLP dataset, we split the dataset from 1\% to 9\% as training data since the performances change more obviously on less than 10\% data. The results for the Micro-F1 and Macro-F1 scores are shown in Figure \ref{fig:node-class}. We can also make similar observations. Overall, the performances on the task of node classification further prove that it is effective to consider a vertex-reinforced random walk for network embedding.

Furthermore, we evaluate the performance of each reinforced random walk based method using PPI dataset, shown in Table \ref{tb: rein-result}. We show Micro-F1 and Macro-F1 scores on the dataset with 50\% of the nodes labeled for training. We achieve consistent results about the effectiveness of the exploration strategy with the experiment on the link prediction task. A different experimental result is that: on node classification task, when calculating convergence score in DRRW, it seems that JS divergence based approach (reinforce2vec-d:JS) performs slightly better than KL based approach (reinforce2vec-d:KL). 
%The reason may be JS divergence is a symmetrized and smooth version of the KL divergence and more suitable for measuring the similarity between the distributions of two paths.  

\begin{table}[t]
\caption{Micro-F1 scores for node classification on BlogCatalog, PPI datasets with 50\% of the nodes labeled and DBLP with 5\% of the nodes labeled for training.}
\label{tb:nc-result}
\centering
\resizebox{0.45\textwidth}{!}{%
\begin{tabular}{c|ccc}
\toprule
Dataset & BlogCatalog & PPI & DBLP \\
 \midrule
DeepWalk & 38.2 & 21.2 & 57.8 \\
node2vec & 40.1 & 21.5 & 58.1 \\
LINE & 22.2 & 20.1 & 49.6 \\
GraRep & 33.5 & 20.5 & 53.2 \\
HOPE & 34.1 & 21.1 & 56.8 \\
reinforce2vec & \textbf{41.4} & \textbf{22.5} & \textbf{58.7} \\
\hline
Gain of reinforce2vec$\left [ \% \right ]$ & \textbf{3.1} & \textbf{4.7} & \textbf{1.0} \\
\bottomrule
\end{tabular}
}
 \vspace{-0.15in}
\end{table}

\begin{table}[t]
\centering
\caption{Performance for different variants of VRRW on node classification on PPI dataset with 50\% of the nodes labeled for training.}
\label{tb: rein-result}
\resizebox{0.45\textwidth}{!}{%
\begin{tabular}{c|c|cc}
\toprule
 \multicolumn{2}{c|}{Variant} & Micro-F1 & Macro-F1 \\
 \midrule
\multirow{3}{*}{reinforce2vec-v} & w/o exploration & 10.9 & 8.9 \\
 & $\epsilon$-Greedy & 19.0 & 16.2 \\
 & UCB & 18.9 & 16.3 \\
 \hline
\multirow{3}{*}{reinforce2vec-d:KL} & w/o exploration & 10.9 & 9.3 \\
 & $\epsilon$-Greedy & 21.4 & 17.5 \\
 & UCB & 22.0 & 18.2 \\
 \hline
\multirow{3}{*}{reinforce2vec-d:JS} & w/o exploration & 10.7 & 9.1 \\
 & $\epsilon$-Greedy & 21.9 & 18.3 \\
 & UCB & \textbf{22.5} & \textbf{18.6} \\
 \bottomrule
\end{tabular}
}
 \vspace{-0.15in}
\end{table}

\subsection{Parameter sensitivity.}
%Reinforce2vec involves a number of hyper-parameters. 
Here, we evaluate how different choices of $\epsilon$ in $\epsilon$-Greedy exploration affect the performance of reinforce2vec on Facebook dataset for link prediction and PPI dataset for node classification.  We test the influence of $\epsilon$  with $\epsilon= \left \{ 0, 0.1, 0.3, 0.5, 0.7, 0.9, 1 \right \}$. The results are shown in Figure \ref{fg:ps}. We get the best results when $\epsilon = 0.5$ on node classification task and $\epsilon = 0.3$ on link prediction task. When $\epsilon = 0$ (only exploitation), it is much worse which proves that the getting stuck problem would dramatically harm the performance for network embedding. Besides, when $\epsilon = 1$, reinforce2vec becomes DeepWalk naturally. As we can see, reinforce2vec outperforms DeepWalk.

%\begin{figure}[t]
%    \centering
%    \caption{Performance of baselines on varying the amount of labeled data used for training.}
%    \label{fg-performance}
%    \subfigure[BlogCatalog-Micro.]{\includegraphics[width=0.23\textwidth]{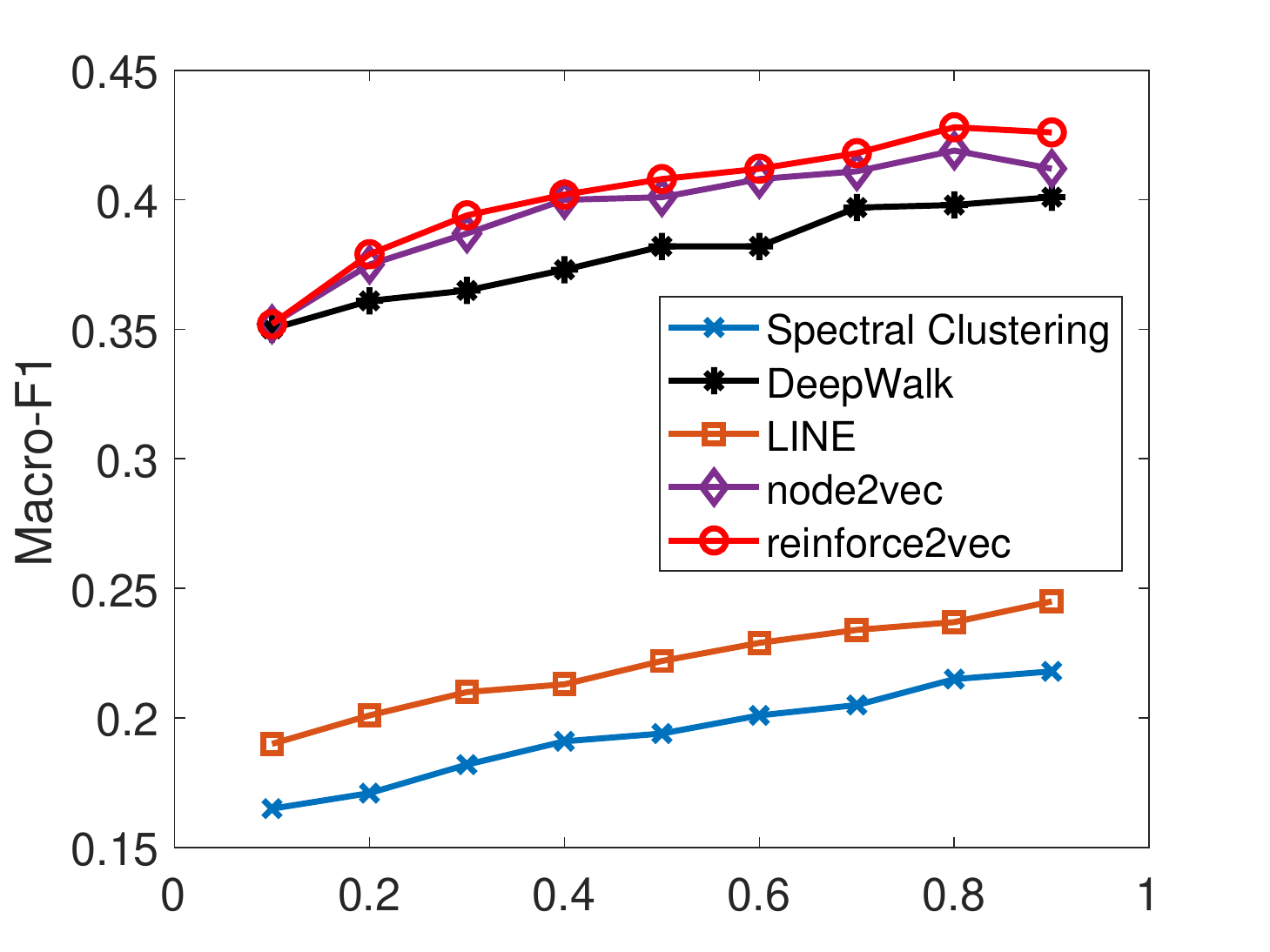}}
%    \subfigure[PPI-Micro.]{\includegraphics[width=0.23\textwidth]{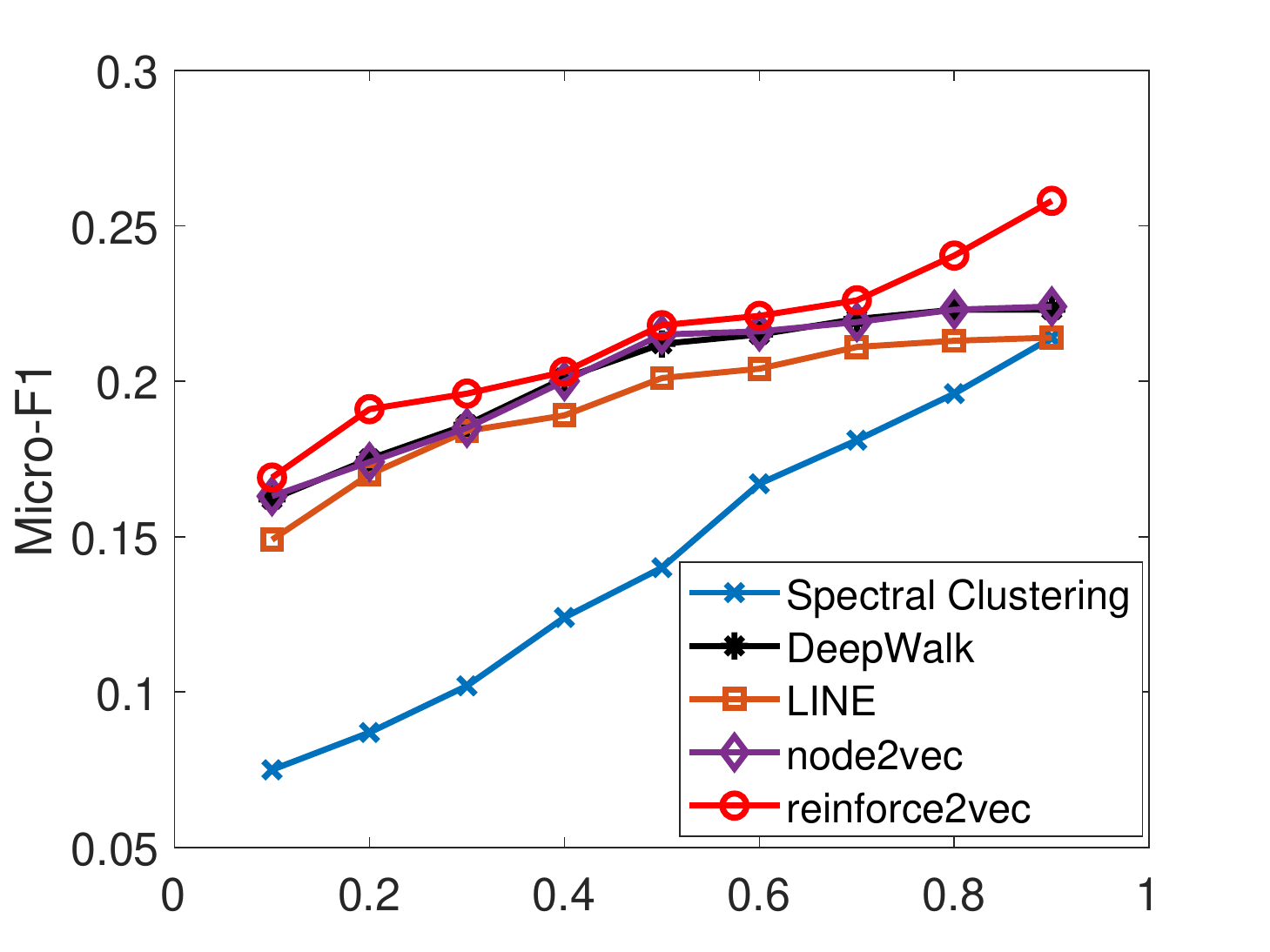}}
%    \subfigure[BlogCatalog-Macro.]{\includegraphics[width=0.23\textwidth]{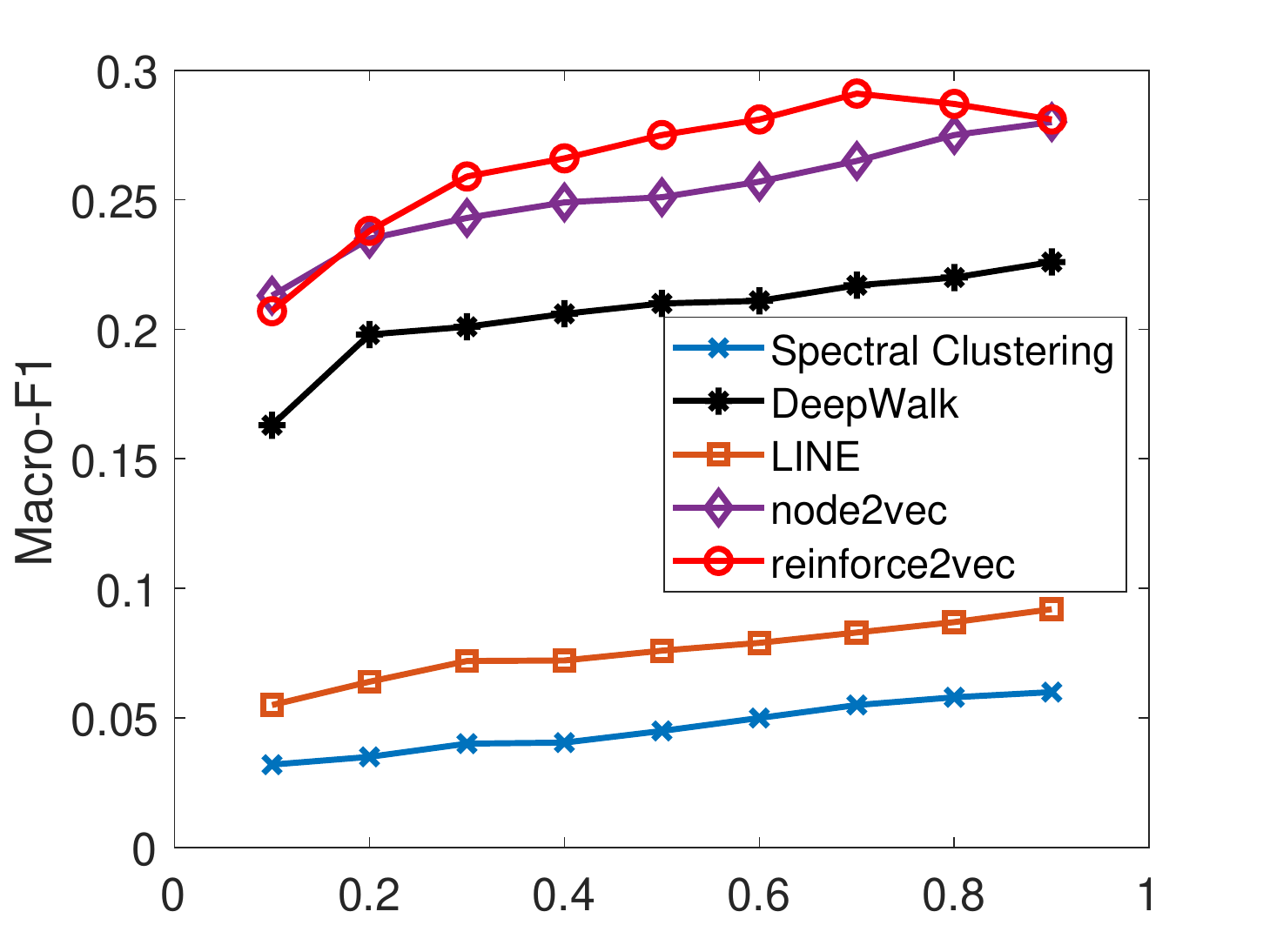}}
%    \subfigure[PPI-Macro.]{\includegraphics[width=0.23\textwidth]{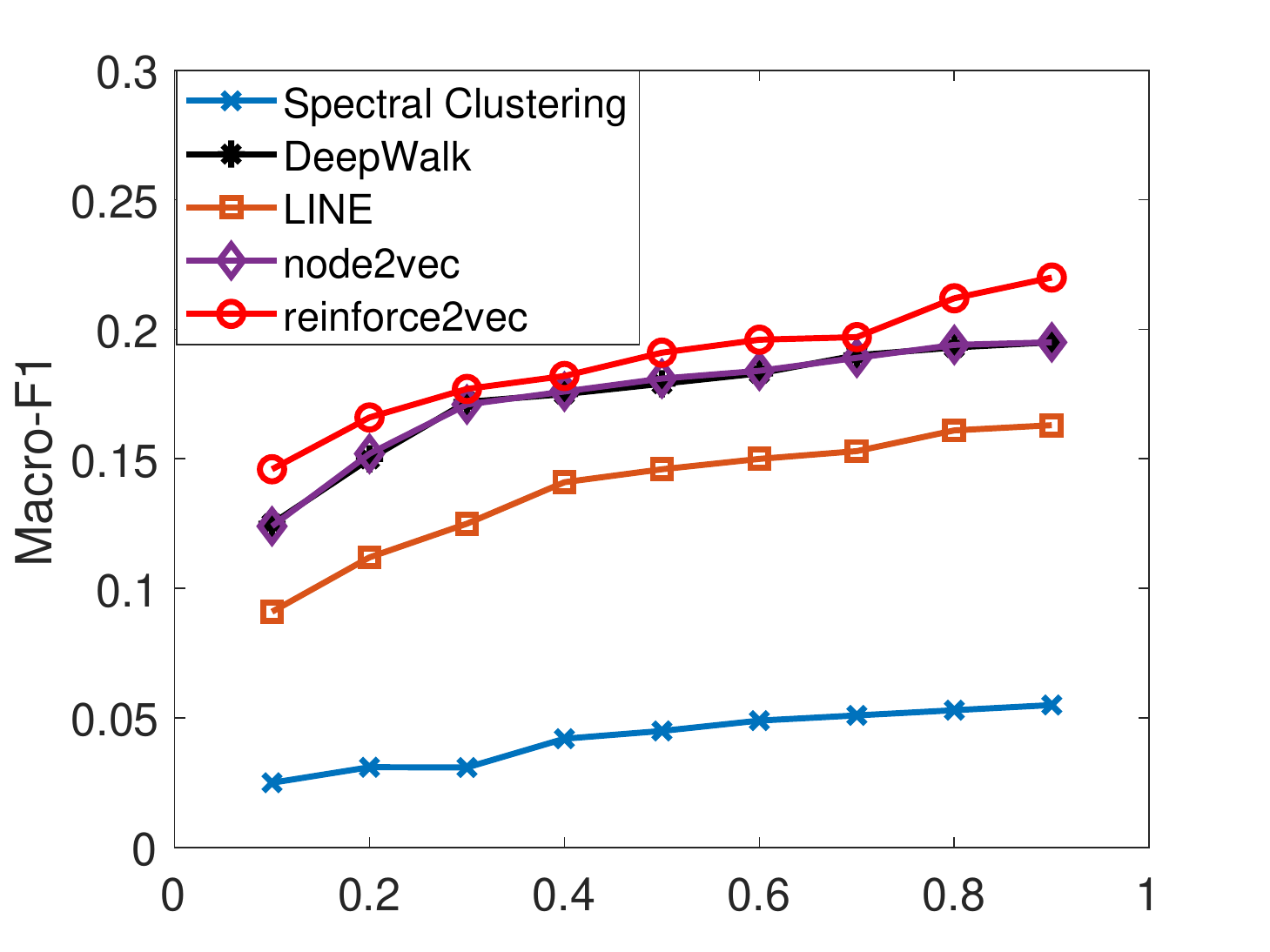}}
%    \centering
%    \vspace{-5px}
%    \label{fig-online-compare}
%\end{figure}

%\subsection{Sensitivity Analysis}
%reinforce2vec-d involves several hyper-parameters. Here, we evaluate how different choices of walk length$L$ and trade-off $\lambda$ affect the performance of our model. Except for the parameter being tested, all other parameters are set as introduced in Section \ref{sec-setting}. 

%\begin{figure}[t]
%\begin{minipage}{0.58\textwidth} %
%\includegraphics[width=1\textwidth]{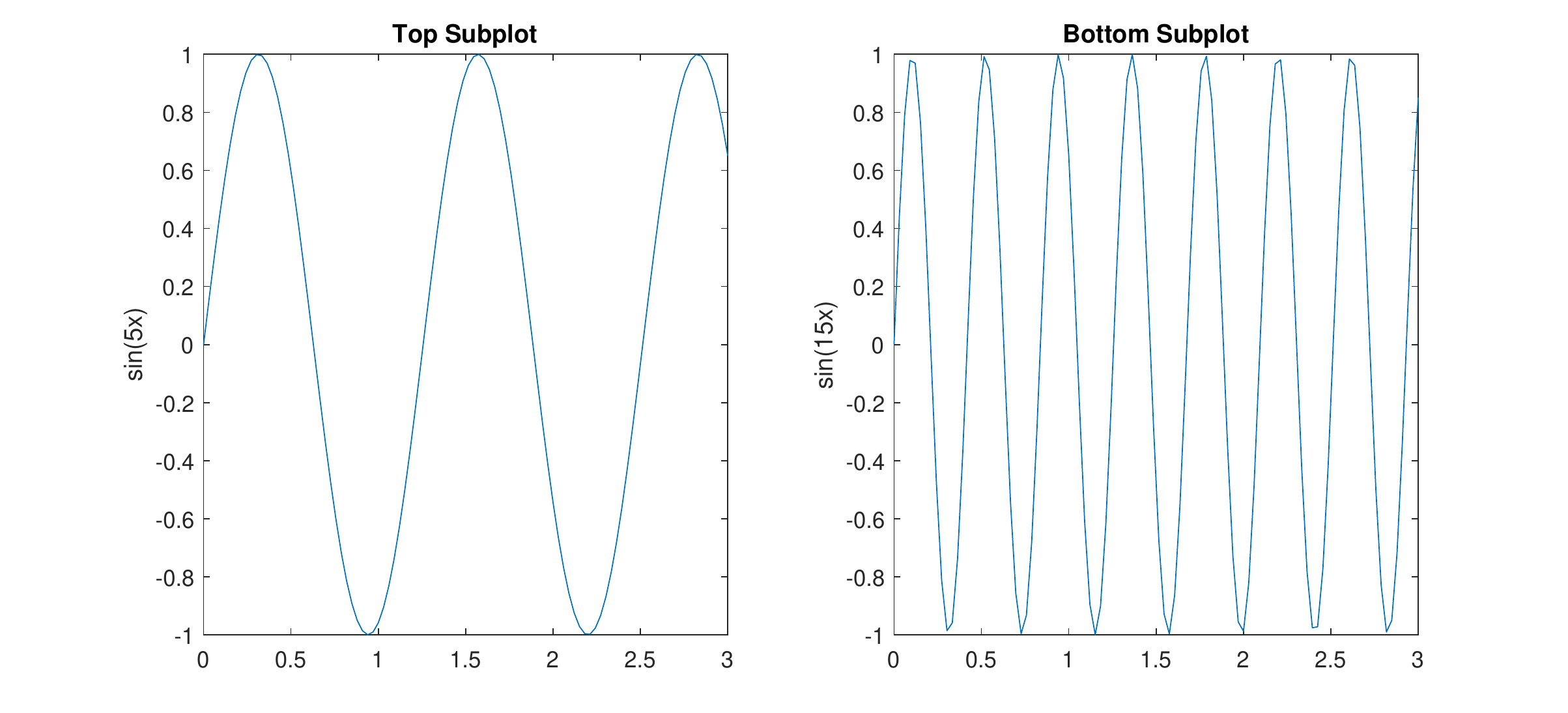}
%    \vspace{-5px}
%    \caption{Sensitivity Aanalysis.}
%    \label{fig-online-compare}
%\end{minipage} %
%\begin{minipage}{.35\textwidth} %
%\includegraphics[width=1\textwidth]{ppi-micro.pdf}
%    \vspace{-5px}
%    \caption{Variation of node distribution during different random walks on PPI.}
%    \label{fig-online-compare}
%\end{minipage}
%\end{figure}

\begin{figure}[t]
    \centering
    \subfigure[Node classification.]{\includegraphics[width=0.23\textwidth]{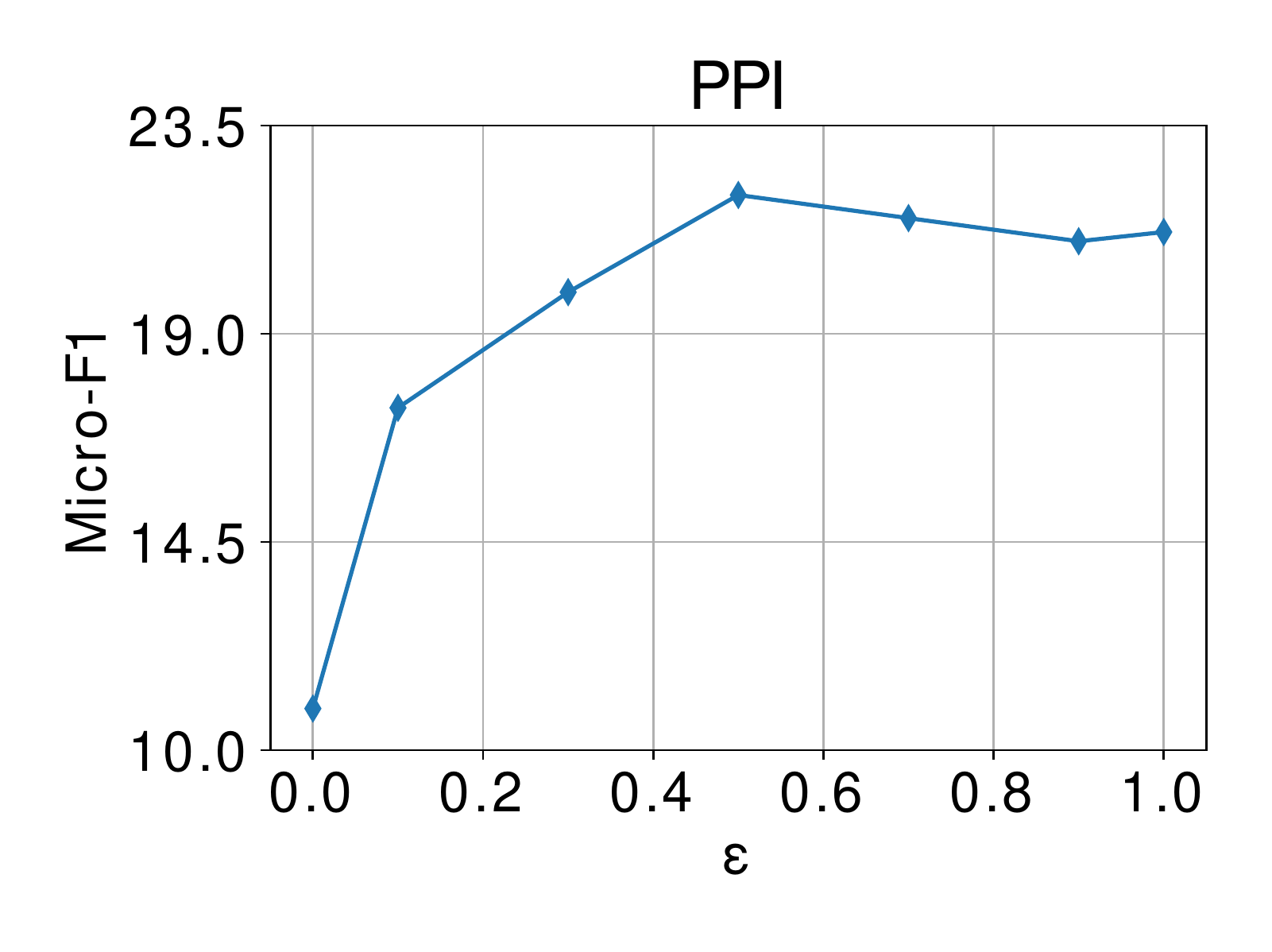}}
    \subfigure[Link Prediction.]{\includegraphics[width=0.23\textwidth]{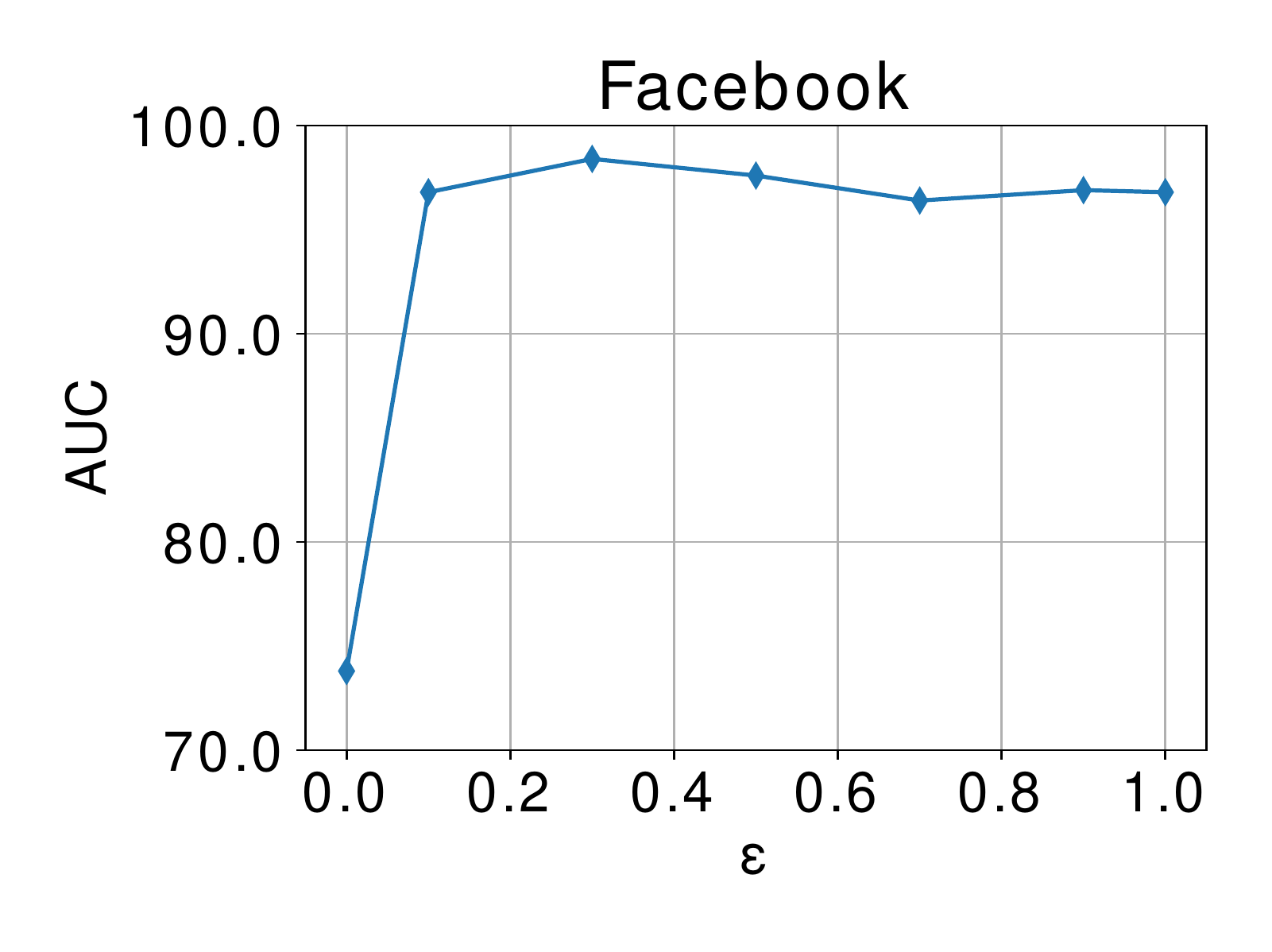}}
    \centering
    \vspace{-5px}
    \caption{Performance of reinforce2vec with $\epsilon$-Greedy on varying the $\epsilon$ on PPI dataset for node classification and on Facebook dataset for link prediction.}
    \label{fg:ps}
    \vspace{-0.2in}
    %\label{fig-online-compare}
\end{figure}

\section{Conclusion}
In this paper, we study the vertex-reinforced random walk (VRRW) for network embedding, which can make full use of the history of a random walk path. By using KL and JS divergences to guide the random walk, we also design a variant of VRRW, denoted as distribution-reinfoced random walk (DRRW).
To address the stuck problem of VRRW and DRRW, we further design an exploitation-exploration mechanism to help the random walk jump out of the stuck set.
By conducting extensive experiments for link prediction and node classification tasks, we demonstrate the effectiveness of the proposed reinforc2vec framework comparing to state-of-the-art random walk based embedding methods.
%\vspace{-0.1in}

\section*{Acknowlegement}
%We thank the anonymous reviewers for their valuable comments and suggestions that help improve the quality of this manuscript.
% This paper was done when the first author was an intern at WeBank AI Department.
This paper was supported by the HKUST-WeBank Joint Lab and the Early Career Scheme (ECS, No. 26206717) from Research Grants Council in Hong Kong.
% In this paper, we propose the reinfoce2vec framework for network embedding, which is based on a novel type of random walk. Different from previous on graphs, 
% which are Markovian chains, we design a distribution-reinforced random walk (reinforce2vec-d), which can not only converge efficiently to a stationary distribution but also alleviate the inherent ``stuck'' problem of VRRW. By applying reinforce2vec to the graph embedding tasks, reinforce2vec outperforms existing Markov based random walk methods. Moreover, reinforce2vec enjoys the advantages of automatically and efficiently stopping given a graph comparing to the fine-tuned longer walk length of Markov based random walk methods, leading to the huge practical potential for graph-based tasks.

%\clearpage
\bibliography{siam}
\bibliographystyle{siam}
%\section{Appendix}
%You may include other additional sections here. 

\end{document}